\newtheorem{theorem}{Theorem}
\newtheorem{corollary}{Corollary}
\newtheorem{lemma}{Lemma}
\theoremstyle{definition}
\newtheorem{definition}{Definition}
\newtheorem{property}{Property}
\newtheorem{notation}{Notation}
\title{\vspace{-15mm}\fontsize{24pt}{10pt}\selectfont\textbf{Perfect Memory Context Trees in time series modeling}} 
\author{
\large
\textsc{Tong Zhang}\\[2mm] 
\normalsize Department of Mathematics, Northeastern University \\ 
\normalsize {zhang.tong@husky.neu.edu} 
\vspace{-5mm}
}
\date{}
\begin{document}

\maketitle % Insert title

\begin{abstract}

\noindent  
The Stochastic Context Tree (SCOT) is a useful tool for studying infinite random sequences generated by an $m$-Markov Chain (m-MC).  It captures the phenomenon that the probability distribution of the next state sometimes depends on less than $m$ of the preceding states. This allows compressing the information needed to describe an m-MC. The SCOT construction has been earlier used under various names: VLMC, VOMC, PST, CTW. 

In this paper we study the possibility of reducing the m-MC to a 1-MC on the leaves of the SCOT. Such context trees are called \emph{perfect-memory}. We give various combinatorial characterizations of perfect-memory context trees and an efficient algorithm to find the minimal perfect-memory extension of a SCOT.

\end{abstract}

\noindent {\bf Index terms} context tree, VLMC, SCOT, m-MC, memory structure, dimension reduction
\section{Introduction}
Consider a data source with unknown distribution characteristics, which can generate sequences $x_{1}, \, x_{2}, ..., \, x_N$ of letters from an alphabet $A=\{a_{1}, \, \dots, \, a_{n}\}$ in a time homogeneous way. If the distribution of each letter $x_k$ depends on the previous $m$ letters $x_{k-1},x_{k-2},\dots, x_{k-m}$ then the sequence can be described by an $m$-Markov chain.

Storing all the information of an $m$-Markov chain, however, can be costly in practice. Thus it becomes desirable to represent a chain in a more efficient way. In many applications it turns out that one doesn't have to know the full $m$ letter history to accurately describe the distribution of the next letter in the stream. It is also often referred to as a Variable Length Markov Chain (VLMC).

One can define $contexts$ as subsequences of letters in the data stream that contain sufficient information about the distribution of the next letter. Thus, contexts are the memory segments as strings that are required for the accurate description of the chain. For any given chain, the set of contexts can be arranged in a \emph{context tree} structure as follows:
 We say that a string $v$ is a postfix of a string $s$, when there exists a string $u$ such that $s=\overline{uv}$. ($\overline{uv}$ means a concatenation of u and v in the natural ordering.) A set $\mathcal{T}$ of strings (and perhaps also of semi-infinite sequences) is called a \emph{context tree} if no $s_1\in\mathcal{T}$ is a postfix of any other $s_2\in\mathcal{T}$. 
 We draw the context tree as a rooted tree where each vertex other than the root is labeled by a letter from the alphabet $A$. A context is a string (a sequence of letters) that follows the path starting from any leaf and ending at the root. A semi-infinite sequence $a_{-\infty}^{-1}\in\mathcal{T}$ is an infinite path to the root. For example, the context marked with * is 1101 in the context tree below.
 
  \Tree[.Root [.0 [.0 ]
               [.1 [.0 ]
                [.1 ] ]]
                 [.1 [.0 [.0 ]
                 [.1 [.0 ]
                 [.1* ]]]
                [.1 ]]]
            \noindent 
 
 It turns out that context trees where each node is either a leaf or has exactly $n$ children (one for each letter in $A$) are particularly important. Such trees are referred to as \emph{complete} (see Definition~\ref{definition1}).
 
A context tree of maximal depth $m$ can be used to naturally represent an $m$-Markov chain by assigning  to each leaf a probability distribution on the alphabet describing the distribution of the next letter~\cite{IEEEhowto:IP15,IEEEhowto:IP14}. Similarly, a context tree can be used to naturally represent a VLMC. Such a structure is called a Stochastic Context Tree (SCOT).  The root represents the upcoming, yet unknown, future state and each child of a node  represents a possible previous state. See figure below.~\cite{IEEEhowto:IP15}

%x, y: adjust scale
\begin{center}
\vspace{-1em}
\begin{tikzpicture}[x=0.5cm, y=0.5cm]
\node at (0, 0) (a) {$X_0$(root)}
	child {node at (-4, -1) {$\displaystyle P(X_0=1)=\frac{1}{2}$} edge from parent node [left=10pt] {0}}
	child {node at (4, -1) (b) {$ X_{-1}$}  
		child {node at (-4, -1) (d) {$\displaystyle P(X_0=1)=\frac{1}{4}$} edge from parent node [left=10pt] {0}} 
		child {node at (4, -1) (e) {$\displaystyle P(X_0=1)=\frac{3}{4}$} edge from parent node [right=10pt] {1}} 
		edge from parent node [right=10pt] {1}};

\end{tikzpicture}
\end{center}

  Then, given an $m$-Markov chain, it can always be represented as a SCOT on the complete $n$-ary tree of depth $m$. However, ideally one would like to try to find the most economical  SCOT representation of the chain. 
 
When trying to find the most economical SCOT representation, however, a curious problem may arise. Suppose that the string $s = x_1 x_2 \dots x_k$ is sufficient to determine the distribution of the next state ($x_{k+1}$), but for some $x_{k+1}=a \in A$ the string $sa$ is not enough to predict the subsequent state ($x_{k+2}$). This implies that $s$ should not be used as a context. Despite that it allows prediction of the next state, it doesn't carry sufficient information to predict later states. We say that a context tree is \emph{perfect-memory} tree (or ``has perfect memory'') if such a situation can never arise. In other words the requirement is that when passing from $x_i$ to $x_{i+1}$, the context for the latter should not include older symbols other than those within the context ending with $x_{i+1}$. We formalize this in Definition~\ref{definition4}.

Thus, for the purpose of applications, it is important to recognize whether a context tree has perfect memory. The only known result about this class of context trees has been the following sufficient condition.
                
\begin{theorem}[\cite{IEEEhowto:IP14}]\label{theorem1}
 Let $\mathcal{T}$ be a complete context tree. If there is a $k$ such that  the distance of each leaf from the root is either $k$ or $k+1$, then $\mathcal{T}$ has perfect-memory.
\end{theorem}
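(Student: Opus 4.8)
The plan is to translate the informal notion of perfect memory into a clean statement about postfixes, and then to run a short descent argument that uses completeness together with the depth restriction. \emph{Reformulation.} I would first unpack Definition~\ref{definition4} into the following equivalent combinatorial condition: the complete tree $\mathcal{T}$ has perfect memory if and only if, for every context (leaf) $w\in\mathcal{T}$ and every letter $a\in A$, the word $wa$ — that is, the context $w$ with the newly read symbol $a$ appended as its most recent letter — admits a postfix that is itself a context of $\mathcal{T}$. This is exactly the negation of the forbidden situation described before Definition~\ref{definition4}: that $s$ determines $x_{k+1}=a$ but $sa$ fails to determine $x_{k+2}$ means precisely that no postfix of $sa$ is a context, forcing the true context of the shifted history to reach strictly before the first letter of $s$.

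\emph{Core argument.} Now assume $\mathcal{T}$ is complete and that every leaf sits at depth $k$ or $k+1$. Fix a context $w$ and a letter $a$. Since $w$ is a leaf, its depth is at least $k$, so $|wa|=|w|+1\ge k+1$; on the other hand the maximal depth of $\mathcal{T}$ is at most $k+1$. I would then read $wa$ starting from its most recent (last) letter and trace the corresponding root-to-leaf descent in $\mathcal{T}$, going to the child labelled $a$, then to the child labelled by the next symbol of $wa$, and so on. At each internal node completeness guarantees that a child for the next symbol exists, so the descent never gets stuck; and since the tree has depth at most $k+1\le|wa|$, the descent must reach a leaf after consuming at most the final $k+1$ symbols of $wa$. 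The labels along that path, read from the leaf up to the root, form a word that is a postfix of $wa$ and is, by construction, a context of $\mathcal{T}$, which is exactly the object promised in the reformulation.

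\emph{Conclusion and the delicate point.} Having produced such a context for every $w$ and every $a$, the forbidden situation can never arise, so $\mathcal{T}$ has perfect memory. The step I expect to carry the real content is the length-versus-depth bookkeeping in the core argument: the single appended symbol lengthens $wa$ by exactly one beyond the minimal leaf depth $k$, which is precisely what is needed to absorb the one-level spread between the shallowest leaves (depth $k$) and the deepest leaves (depth $k+1$) and thereby force the descent to terminate \emph{within} $wa$ rather than past its front. I would verify the two boundary cases $|w|=k$ and $|w|=k+1$ explicitly and confirm that completeness is invoked exactly where the descent must continue. It is also worth recording that the same reasoning proves the slightly more general sufficient condition that the maximal leaf depth exceed the minimal leaf depth by at most one.
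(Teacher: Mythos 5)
Your proof is correct, but it takes a genuinely different route from the one this paper supports. Note first that the paper contains no proof of Theorem~\ref{theorem1} at all: the result is imported from \cite{IEEEhowto:IP14}, and within this paper it would most naturally be derived as a corollary of Theorem~\ref{theorem2}. That derivation runs as follows: each root subtree $\mathcal{T}_i$ is itself a complete context tree whose contexts have length $k-1$ or $k$ (the degenerate case $\mathcal{T}_i=\emptyset$ being trivial), whereas every context of $\mathcal{T}$ has length $k$ or $k+1$; since both trees are complete, every $c\in\mathcal{T}_i^*$ is comparable under $\prec$ to some $b\in\mathcal{T}^*$, and a proper relation $b\prec c$, $b\neq c$, is impossible because it would give $k\le|b|<|c|\le k$; hence $c\prec b$, so $\mathcal{T}_i\subseteq\mathcal{T}$ for every $i$ and Theorem~\ref{theorem2} applies. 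Your argument instead proves the statement from scratch: your ``reformulation'' is exactly Definition~\ref{definition4} (with contexts written oldest letter first, so that a postfix is the recent end of the history), and your descent along $\overline{ca_i}$ read from its last letter is sound --- completeness guarantees the walk never stalls at an internal node, and the depth bookkeeping closes because internal nodes have depth at most $k$ while $|\overline{ca_i}|\ge k+1$, so the walk cannot exhaust $\overline{ca_i}$ before reaching a leaf; the consumed labels, read leaf to root, then form a context that is a postfix of $\overline{ca_i}$, as required. What your route buys is independence from Lemma~\ref{lemma1} and Theorem~\ref{theorem2}, plus a transparent explanation of why a one-level spread of leaf depths is the exact threshold: with leaves at depths $k$ and $k+2$ the walk can run out of letters at an internal node of depth $k+1$, which is precisely what happens in the paper's non-example $\mathcal{T}^*=\{00,10,001,0101,1101,11\}$ with $c=10$ and $a=1$. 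What the Theorem~\ref{theorem2} route buys is brevity given the paper's machinery, and it exhibits Theorem~\ref{theorem1} as a special case of the general characterization rather than as an isolated sufficient condition.
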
 

The goal of this paper is twofold. First, we provide a simple characterization of perfect-memory context trees in terms of it's subtrees. For a context tree $\mathcal{T}$ let us denote by $\mathcal{T}_1,\mathcal{T}_2,\dots,\mathcal{T}_n$ the subtrees rooted at the children of $\mathcal{T}$'s root. For two context trees $\mathcal{T}, \mathcal{T}'$ let us write $\mathcal{T}' \subseteq \mathcal{T}$ if $\mathcal{T}'$ is contained in $\mathcal{T}$ such that they have the same root.  Our first main results is the following. 

\begin{theorem}\label{theorem2}
 Let $\mathcal{T}$ be a complete context tree. Then, $\mathcal{T}$ has perfect-memory if and only if   $\ \forall i\in \{1,...,n\}, \  \mathcal{T}_i \subseteq \mathcal{T} $.
\end{theorem}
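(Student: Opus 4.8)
The plan is to recast both sides of the claimed equivalence as statements about a single operation on the strings labelling the nodes of $\mathcal{T}$, and then to match them. For a node $u\neq\text{root}$, let $\tau(u)$ denote the string obtained by deleting the \emph{most recent} letter of $u$, i.e.\ the letter read first on the path from the root (the one on the root-adjacent edge). Geometrically $\tau$ is exactly the re-rooting operation: the re-rooted subtree $\mathcal{T}_i$ consists precisely of the strings $\tau(u)$ as $u$ ranges over the nodes whose most recent letter is $a_i$. Since the node set of a context tree is closed under taking prefixes, containment of two trees sharing a root is the same as containment of their node sets, and it suffices to test it on leaves. Hence the condition $\forall i,\ \mathcal{T}_i\subseteq\mathcal{T}$ translates into the clean statement $(\star)$: for every node $u$ of $\mathcal{T}$, the string $\tau(u)$ is again a node of $\mathcal{T}$.

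Next I would unwind Definition~\ref{definition4}. Reading a context $c$ and then a new symbol $a$ produces a history whose most recent letters are $a$ followed by the letters of $c$; the next context is located by walking down from the root along this sequence until a leaf is met, and perfect memory at $(c,a)$ asks that this leaf already be a postfix of $\overline{ca}$. Because $\mathcal{T}$ is complete the walk never falls off the tree, so the only way to fail is that, after reading all of $\overline{ca}$, one is still at an \emph{internal} node; that is, $\overline{ca}$ itself (read most-recent-first) is an internal node $u$. Such a $u$ has most recent letter $a$ and satisfies $\tau(u)=c$, a leaf. Thus perfect memory acquires the equivalent combinatorial form $(\dagger)$: no internal node $u$ has $\tau(u)$ equal to a leaf.

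Finally I would prove $(\star)\Leftrightarrow(\dagger)$, and here completeness is the engine on both sides. For $(\star)\Rightarrow(\dagger)$: if some internal $u$ had $\tau(u)$ a leaf, completeness would give a child $ub$ with $\tau(ub)=\tau(u)\,b$ extending a leaf, hence $\tau(ub)$ is not a node, contradicting $(\star)$. The harder direction is $(\dagger)\Rightarrow(\star)$, which I expect to be the main obstacle, since $(\dagger)$ only forbids $\tau$ from landing \emph{on} a leaf whereas $(\star)$ forbids it from leaving the tree at all. I would argue by taking a shortest node $u$ violating $(\star)$, so $\tau(u)\notin\mathcal{T}$; its parent $p$ is a shorter node, so by minimality $\tau(p)$ is a node, and writing $u=pb$ gives $\tau(u)=\tau(p)\,b$. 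As $\tau(u)$ is not a node while $\tau(p)$ is, the node $\tau(p)$ has no child $b$, so by completeness $\tau(p)$ must be a leaf; but $p$ is internal (it has the child $u$), giving an internal node with $\tau(p)$ a leaf, contradicting $(\dagger)$. This closes the loop and, together with the two reformulations, yields the theorem. The delicate points to pin down are the orientation conventions inside $\tau$ (most recent versus oldest letter) and the base case $\lvert u\rvert=1$, where $\tau(u)$ is the root and no violation can occur.
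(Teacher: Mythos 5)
Your proof is correct, but it takes a genuinely different route from the paper's. The paper never leaves the level of contexts (leaves): it first establishes Lemma~\ref{lemma1} --- for complete trees, $\mathcal{A}\subseteq\mathcal{B}$ is equivalent to every context of $\mathcal{B}$ having a postfix in $\mathcal{A}^*$ --- and then Theorem~\ref{theorem2} follows by an almost definitional translation: any context $u\prec\overline{c'a_i}$ has the form $\overline{ca_i}$ with $c\in\mathcal{T}_i^*$ and $c\prec c'$, so stripping and re-appending $a_i$ converts Definition~\ref{definition4} directly into the Lemma~\ref{lemma1} form of $\mathcal{T}_i\subseteq\mathcal{T}$. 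You instead lift everything to node sets: your $(\star)$ (the shift $\tau$ of every node is again a node) and $(\dagger)$ (no internal node shifts onto a leaf) reformulate the two sides, and you connect them by a minimal-counterexample argument on depth, in which completeness forces the parent's shift to be a leaf. In effect, your $(\dagger)\Rightarrow(\star)$ step re-proves the content of Lemma~\ref{lemma1} rather than quoting it. What the paper's route buys is modularity: Lemma~\ref{lemma1} is reused later (in Corollaries~\ref{corollary1} and~\ref{corollary3}), and the proof of the theorem proper is then only a few lines of string manipulation. What your route buys is that it is self-contained, the walk-down-the-tree unwinding of Definition~\ref{definition4} makes the Markov-chain meaning of perfect memory transparent, and your intermediate characterizations have independent value --- indeed $(\star)$ restricted to leaves is precisely condition [2] of the paper's Corollary~\ref{corollary1}, which you therefore obtain for free. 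One small caution: your remark that node sets are closed under \emph{prefixes} is true only in your root-down orientation (in the paper's leaf-to-root convention, ancestors correspond to postfixes); you flag this yourself, and your usage is internally consistent, but it is worth fixing one convention explicitly when writing this up.
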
 

In the process we study the partially ordered set of of perfect-memory context trees. We find that this is a lattice (that is, intersections and unions preserve perfect-memory), which in turn allows us to talk about the perfect-memory closure of a context tree.  Our second goal is to give a simple algorithm to construct the perfect-memory closure of an arbitrary context tree.

The paper is organized as follows. In Section~\ref{sec:prelim} we give the detailed definitions and notation, and derive some simple properties of context trees. In Section~\ref{sec:perfect-memory} we formally introduce perfect-memory context trees, establish their main properties, prove Theorem~\ref{theorem2} , and describe some of its corollaries.
In Section~\ref{sec:alg} we describe a linear algorithm for finding the perfect-memory closure. Finally in Section~\ref{sec:examples} we provide examples to illustrate some of our results. 
 
%%%%%%%%%%%%%%%%%%% 
\section{Preliminaries}\label{sec:prelim}
%%%%%%%%%%%%%%%%%%%% 

In this section we provide the basic concepts about context trees.

Fix a nonempty and finite set $A=\{a_{1}, \, \dots, \, a_{n}\}$. The set $A$ is called the \textit{alphabet}, and its elements are called \textit{letters}. A finite sequence of elements of $A$, including the empty sequence, is called a \textit{string}. The set of all strings is denoted by $S$. 

For two strings $u,v\in S$, we denote the concatenation of $u$ and $v$ in the natural ordering by $\overline{uv}$. We say that a string $v\in S$ is a \textit{postfix} of a string $s\in S$, denoted by $v\prec s$, if there exists a string $w\in S$ such that $s=\overline{wv}$.\\
In order to not abuse the terminology we give the definitions as follows:
\begin{definition}\label{definition1}
 A \textit{``context tree''} on the alphabet $A$ is a rooted tree where each vertex, except the root, is labeled by one letter in $A$ and no two siblings are labeled by the same letter. A context tree is said to be \textit{``complete''}, if each node is either a leaf or has exactly $|A|$ children. In a context tree, a \textit{``context''} is a string that follows the letters along a path starting from a leaf and ending at the root.
\end{definition}

For a context tree $\mathcal{T}$, we denote the set of all its contexts by $\mathcal{T}^*$. The figure below provides an example of a complete context tree $\mathcal{T}$ over the alphabet $\{0,1\}$:\vspace{0.2cm}

 % Fig 2.
  \Tree[.Root [.0 [.0 ]
               [.1 [.0 ]
                [.1 ] ]]
                 [.1 [.0 [.0 ]
                 [.1 [.0 ]
                 [.1 ]]]
                [.1 ]]]
            \noindent 
            
            In this context tree, the set of contexts is $\mathcal{T}^*$=$\{00,010,110,001,0101,1101,11\}$.\medskip
    
Note that for each context $c$ in a context tree $\mathcal{T}$ there is a unique path $\mathcal{P}_{c}$ in $\mathcal{T}$ starting from a leaf and ending at a root that follows all the letters in $c$. 

It can be easily verified that if $\mathcal{T}$ is a context tree, then no context $s\in\mathcal{T}^*$ is a postfix of any other context $s'\in\mathcal{T}^*$.\medskip
\begin{definition}
\label{definition2}
For nonempty context trees $\mathcal{A}$, $\mathcal{B}$ over the same alphabet, we write $\mathcal{A} \subseteq \mathcal{B}$ to denote \textit{``$\mathcal{A}$ is contained  in $\mathcal{B}$ at the root''}, that is

 \[ \forall \  a \in \mathcal{A}^*,\ \exists \ b\in \mathcal{B}^*:\ a\prec b.\]
\end{definition}
The following lemma gives an equivalent condition, provided that the context trees are complete.             
It will be used to prove Theorem~\ref{theorem2} in the next section. \\
%{\bf Lemma}: 

\begin{lemma}\label{lemma1}

  Let $\mathcal{A},\ \mathcal{B}$ be complete context trees over the same alphabet $A$. Then, the following statements are equivalent.
\begin{enumerate}
 \item[(i)]    $\mathcal{A} \subseteq \mathcal{B}$ 
  
 \item[(ii)] $\forall \  b \in \mathcal{B}^*,\exists\  a\in \mathcal{A}^*$:\ \;$a\prec b$.
 \end{enumerate}

\end{lemma}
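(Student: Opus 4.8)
The plan is to reduce both conditions to a single \emph{covering} property of complete context trees and then exploit the antisymmetry of the postfix order. First I would establish the covering property: in a complete context tree $\mathcal{T}$, every semi-infinite sequence $\sigma$ (equivalently, every finite string at least as long as the depth of $\mathcal{T}$) has a context $c\in\mathcal{T}^*$ with $c\prec\sigma$. I would prove this by descending from the root, at each node following the child whose label is the next letter of $\sigma$ read from its most recent, root-adjacent end. Completeness guarantees that at every internal node the required child exists, so the descent can only halt by reaching a leaf, and the context read off at that leaf is the desired postfix. Along the way I would record two elementary facts about $\prec$: it is a partial order, in particular antisymmetric ($u\prec v$ and $v\prec u$ imply $u=v$), and any two postfixes of the same sequence are $\prec$-comparable, since the shorter is always a postfix of the longer.

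Next I would prove (i)$\Rightarrow$(ii). Fix $b\in\mathcal{B}^*$ and extend it to a semi-infinite sequence $\sigma$ having $b$ as a postfix. By the covering property applied to the complete tree $\mathcal{A}$, there is some $a\in\mathcal{A}^*$ with $a\prec\sigma$. Since $a$ and $b$ are both postfixes of $\sigma$ they are $\prec$-comparable, so it suffices to rule out $b\prec a$ (the case $|a|>|b|$). Assuming $b\prec a$, hypothesis (i) supplies $b'\in\mathcal{B}^*$ with $a\prec b'$; transitivity then gives $b\prec b'$. But no context of $\mathcal{B}$ is a postfix of another, so $b=b'$, and now $a\prec b$ together with $b\prec a$ forces $a=b$ by antisymmetry, contradicting $b\prec a$ being strict. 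Hence the remaining case holds, namely $a\prec b$, which is exactly (ii).

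Finally, (ii)$\Rightarrow$(i) is completely symmetric: I would swap the roles of $\mathcal{A}$ and $\mathcal{B}$, use the covering property for the complete tree $\mathcal{B}$ to find $b\in\mathcal{B}^*$ with $b\prec\sigma$ for a sequence $\sigma$ extending a given $a\in\mathcal{A}^*$, and then use hypothesis (ii) together with the context-tree property of $\mathcal{A}$ to exclude $b\prec a$, yielding $a\prec b$. I expect the main obstacle to lie in the covering property rather than in the two implications: one must argue carefully that the root-guided descent terminates at a leaf and that completeness is precisely what prevents it from getting stuck at an internal node, and one must handle the finite-versus-semi-infinite distinction so that ``postfix'' remains meaningful throughout. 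Once the covering property and the antisymmetry of $\prec$ are in place, both implications are short and identical up to interchanging $\mathcal{A}$ and $\mathcal{B}$.
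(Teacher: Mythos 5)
Your proof is correct and takes essentially the same approach as the paper: both arguments rest on the comparability of contexts across two complete trees together with the fact that no context of a context tree is a proper postfix of another, and your case analysis ruling out $b\prec a$ (respectively $b\prec a$ in the converse direction) is exactly the paper's. The only difference is that you explicitly establish the covering/comparability property via the root-guided descent, whereas the paper simply asserts this consequence of completeness without proof, so your write-up is if anything slightly more self-contained.
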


\begin{proof}
    %We know that for any two vertex u,v in a tree $\mathcal{T}$, $\mathcal{T}$ has exactly one u,v-path. So each context path is the path that starts from a leaf and ends at the root.
 
 ``(i) $\Longrightarrow$ (ii)":    
 Since $\mathcal{A}$ and $\mathcal{B}$ are both complete context trees over the same alphabet, $\forall b\in \mathcal{B}^*$, $\exists \  a\in \mathcal{A}^*$ s.t. $a\prec b$ or $b\prec a$. If $a\prec b$, we are done. If $b\prec a$, then because of (i), $\exists b' \in \mathcal{B}^*$, s.t. $a\prec b'$. We have $b\prec a\prec b'$, $b$ and $b'$ are both contexts in $\mathcal{B}^*$, then $b=b'=a$, thus, $a\prec b$.

 ``(ii) $\Longrightarrow$ (i)": 
  Since $\mathcal{A}$ and $\mathcal{B}$ are both complete context trees over the same alphabet, $\forall  \ a\in \mathcal{A}^* $, $\exists \  b\in \mathcal{B}^*$ s.t. $a\prec b$ or $b\prec a$. If $a\prec b$, by Definition \ref{definition2} we are done. If $b\prec a$, then because of (ii), $\exists \ a'\in \mathcal{A}^*$, s.t. $a'\prec b$. We have $a'\prec b\prec a$, $a$ and $a'$ are both contexts in $\mathcal{A}^*$, therefore $a'=a=b$, thus, $a\prec b$, $\mathcal{A} \subseteq \mathcal{B}$. 
  \end{proof}

Note that it is essential in Lemma \ref{lemma1} that $\mathcal{A}$ and $\mathcal{B}$ are both complete. Indeed, consider the context $\mathcal{A}$ and $\mathcal{B}$ in the figure below. \vspace{.2cm}

\makeatletter
\newcommand{\rightDashedLines}[1]{
\begin{picture}(2,.5)
\put(0,0){\line(2,1){1}}
\multiput(2,0)(-.2,.1){5}{\line(-2,1){0.1}}
\end{picture}}
\let\qdrawReal=\qdraw@branches
\newcommand\brOverrideRight{\let\qdraw@branches=\rightDashedLines}
\newcommand{\leftDashedLines}[1]{
\begin{picture}(2,.5)
\multiput(0,0)(.2,.1){5}{\line(2,1){0.1}}
\put(2,0){\line(-2,1){1}}
\end{picture}}
\let\qdrawReal=\qdraw@branches
\newcommand\brOverrideLeft{\let\qdraw@branches=\leftDashedLines}

\newcommand{\rightEmpty}[1]{
\begin{picture}(2,.5)
\put(0,0){\line(2,1){1.1}}
\end{picture}}
\let\qdrawReal=\qdraw@branches
\newcommand\brOverrideRightEmpty{\let\qdraw@branches=\rightEmpty}

\newcommand{\bothDashedLines}[1]{
\begin{picture}(2,.5)
\multiput(0,0)(.18,.09){6}{\line(2,1){0.1}}
\multiput(2,0)(-.18,.09){6}{\line(-2,1){0.1}}
\end{picture}}
\let\qdrawReal=\qdraw@branches
\newcommand\brOverrideBoth{\let\qdraw@branches=\bothDashedLines}
\newcommand\brRestore{\let\qdraw@branches=\qdrawReal}
\makeatother

 \Tree[.Root($\mathcal{A}$) [.0 [.0 ]
               [.1 ]]
                 [.1 [.0 ]
                [.1 ]]]
 \Tree[.Root($\mathcal{B}$) [.0 [.0 ]
               [.1 ]  ]   
                 [.1 [.0 ] [.~ ] !{\brOverrideRightEmpty} ] !{\brRestore} ]
                 
\vspace{.2cm}
                 
We have $\mathcal{A}^*=\{00,10,01,11\}$, $\mathcal{B}^*=\{00,10,01\}$. Therefore,  condition (ii) of Lemma \ref{lemma1} is satisfied. However, $\mathcal{A}$ is not contained in $\mathcal{B}$ at the root, since 11 is not a postfix of any string in $\mathcal{B}^*$. In the other direction, it is clear that $\mathcal{B}$ is contained in $\mathcal{A}$ at the root but 11 does not have any postfix that is a context in $\mathcal{B}$.

Given two complete context trees over the same alphabet, then each context in a context tree is either a postfix of some context in the other context tree or has a postfix that is a context in the other context tree. Based on this fact, we can define the following.
\begin{definition}
\label{definition3}
For nonempty complete context trees $\mathcal{A}$, $\mathcal{B}$ over the same alphabet, we write $\mathcal{A} \cap \mathcal{B}$ to denote \textit{``the intersection at the root'' of $\mathcal{A}$ and $\mathcal{B}$} and we write $\mathcal{A} \cup \mathcal{B}$ to denote \textit{``the union at the root'' of $\mathcal{A}$ and $\mathcal{B}$}, they are
\[(\mathcal{A}\cap \mathcal{B})^*=\{u|u\in \mathcal{A}^*, \ \exists c\in \mathcal{B}^*, \text{ s.t. } u\prec c;\  \text{or }\ u\in \mathcal{B}^*, \ \exists c\in \mathcal{A}^*, \text{ s.t. } u\prec c \};\]
\[(\mathcal{A}\cup \mathcal{B})^*=\{u|u\in \mathcal{A}^*, \ \exists c\in \mathcal{B}^*, \text{ s.t. } c\prec u;\  \text{or }\ u\in \mathcal{B}^*, \ \exists c\in \mathcal{A}^*, \text{ s.t. } c\prec u \}.\]
\end{definition}

Note that $(\mathcal{A}\cap \mathcal{B})^*$ is not equal to $\mathcal{A}^*\cap \mathcal{B}^*$ most of the time. 
 The next lemma shows the relationship between the two expressions. 
\begin{lemma}
\label{property2}
Let $\mathcal{A},\ \mathcal{B}$ be complete context trees on the same alphabet $A$. Then we have\\
$(\mathcal{A}\cap \mathcal{B})^* \cap (\mathcal{A}\cup \mathcal{B})^*= \mathcal{A}^*\cap \mathcal{B}^* $ and 
$(\mathcal{A}\cap \mathcal{B})^* \cup (\mathcal{A}\cup \mathcal{B})^*= \mathcal{A}^*\cup \mathcal{B}^* $.\\
\end{lemma}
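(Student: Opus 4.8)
The plan is to establish both set-identities by double inclusion, using two facts already available: the antichain property (no context of a tree is a postfix of any \emph{other} context of the same tree) and the comparability fact stated just before Definition~\ref{definition3} (for complete trees, every context of one tree is $\prec$-comparable to some context of the other). Throughout write $I:=(\mathcal{A}\cap\mathcal{B})^*$ and $U:=(\mathcal{A}\cup\mathcal{B})^*$. By Definition~\ref{definition3}, $I$ records, for each comparable pair, the shorter partner (the one that is a postfix), $U$ records the longer partner, and $\mathcal{A}^*\cap\mathcal{B}^*$ is the set of strings that are contexts of both trees at once; keeping this picture in mind guides every case below.

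For the union identity $I\cup U=\mathcal{A}^*\cup\mathcal{B}^*$, the inclusion ``$\subseteq$'' is immediate because, by definition, each element of $I$ or of $U$ already lies in $\mathcal{A}^*$ or in $\mathcal{B}^*$. For ``$\supseteq$'' I would take an arbitrary $a\in\mathcal{A}^*$ (the case $a\in\mathcal{B}^*$ is symmetric) and apply the comparability fact to get $b\in\mathcal{B}^*$ with $a\prec b$ or $b\prec a$: in the former $a\in I$, in the latter $a\in U$, and if $a=b$ it belongs to both. Hence $a\in I\cup U$, giving $\mathcal{A}^*\subseteq I\cup U$ and, symmetrically, $\mathcal{B}^*\subseteq I\cup U$.

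For the intersection identity $I\cap U=\mathcal{A}^*\cap\mathcal{B}^*$, the inclusion ``$\supseteq$'' is easy: any $u\in\mathcal{A}^*\cap\mathcal{B}^*$ satisfies $u\prec u$, so it qualifies for membership in both $I$ and $U$ with the witness $c=u$. The reverse inclusion is the crux, and I expect the main obstacle to lie here. Given $u\in I\cap U$, I would unfold the two defining disjunctions. In the ``mixed'' cases, where $u$ is recorded as an $\mathcal{A}$-context by one of $I,U$ and as a $\mathcal{B}$-context by the other, we get $u\in\mathcal{A}^*\cap\mathcal{B}^*$ at once. The delicate ``same-tree'' cases are where the antichain property does the real work: if $u\in I$ via $u\in\mathcal{A}^*$ with $u\prec c\in\mathcal{B}^*$, and $u\in U$ via $u\in\mathcal{A}^*$ with $d\prec u$, $d\in\mathcal{B}^*$, then $d\prec u\prec c$ yields $d\prec c$ with $c,d\in\mathcal{B}^*$, so $c=d$ by the antichain property, forcing $u=d\in\mathcal{B}^*$ and hence $u\in\mathcal{A}^*\cap\mathcal{B}^*$; the case with $\mathcal{A}$ and $\mathcal{B}$ interchanged is identical. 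The one point requiring genuine care is ensuring that the two witnesses $c,d$ lie in the \emph{same} tree before invoking the antichain property, which is precisely why the mixed/same-tree case split cannot be avoided.
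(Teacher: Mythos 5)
Your proof is correct, but there is nothing in the paper to compare it against: the paper states this lemma and then moves on without giving any proof, so your write-up actually supplies the missing argument. Checking it step by step (writing $I=(\mathcal{A}\cap\mathcal{B})^*$ and $U=(\mathcal{A}\cup\mathcal{B})^*$ as you do): the inclusion $I\cup U\subseteq \mathcal{A}^*\cup\mathcal{B}^*$ is indeed immediate from Definition~\ref{definition3}, since membership in $I$ or $U$ requires membership in $\mathcal{A}^*$ or $\mathcal{B}^*$; the reverse inclusion correctly invokes the comparability fact stated before Definition~\ref{definition3}, which is exactly where completeness of both trees is needed; the inclusion $\mathcal{A}^*\cap\mathcal{B}^*\subseteq I\cap U$ works because $\prec$ is reflexive, so $u$ itself serves as the witness in both $I$ and $U$; and in the same-tree cases of the converse inclusion, the chain $d\prec u\prec c$ with $c,d$ in the same tree gives $c=d$ by the antichain property (transitivity of $\prec$), and then $d\prec u\prec d$ forces $u=d$ by antisymmetry of the postfix order. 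Your closing remark is also the right diagnosis: the only way this argument could go wrong is if one applied the antichain property to witnesses drawn from different trees, which is precisely what the mixed/same-tree case split rules out. One cosmetic simplification: in the union identity the subcase ``$a=b$'' is already covered by $a\prec b$, so it need not be mentioned separately.
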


\begin{lemma}
There is a natural bijection map between the set of finite context trees and the set of finite complete context trees over the same alphabet.
\end{lemma}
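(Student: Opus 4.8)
The plan is to exhibit the bijection by structural recursion on the root's children, exploiting the fact that a finite complete context tree either is a single root or splits \emph{uniquely} into an ordered $n$-tuple of complete subtrees (one per letter), whereas a general context tree is specified by choosing, for each letter $i\in\{1,\dots,n\}$, either to omit the child $i$ or to attach below it a general context subtree. First I would make this decomposition precise: for a context tree $\mathcal{T}$ and a letter $i$ present at its root, write $\mathcal{T}^{(i)}$ for the subtree hanging below the child labeled $i$, viewed as a context tree in its own right (forgetting the label $i$ on its new root). Writing $\mathcal{C}$ and $\mathcal{G}$ for the sets of finite complete and finite general context trees, this yields canonical identifications $\mathcal{C}\cong\{\bullet\}\sqcup\mathcal{C}^{n}$ and $\mathcal{G}\cong\bigl(\{*\}\sqcup\mathcal{G}\bigr)^{n}$, where $\bullet$ is the root-only tree and $*$ records an omitted child.

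Next I would define the forward map $\Phi\colon\mathcal{G}\to\mathcal{C}$ recursively: at every node, replace each \emph{omitted} child by a bare leaf and each \emph{present} child-subtree $\mathcal{T}^{(i)}$ by its recursive image $\Phi(\mathcal{T}^{(i)})$, so the root always acquires all $n$ children and the result is complete. For the inverse $\Psi$ I would read a complete tree $\mathcal{S}$ from the root downward: a child that is a \emph{bare leaf} is decoded as an omitted child, while a child that is a \emph{nontrivial} complete subtree is decoded, recursively, as a present subtree $\Psi(\mathcal{S}^{(i)})$. Both recursions terminate because each step strictly decreases the height, so $\Phi$ and $\Psi$ are well defined, and verifying $\Psi\circ\Phi=\mathrm{id}$ and $\Phi\circ\Psi=\mathrm{id}$ is then a routine induction on height.

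The hard part will be the injectivity of $\Phi$, i.e.\ ensuring the decoding rule in $\Psi$ is unambiguous: an omitted child and a present child must never yield the same complete subtree. The key observation that makes this work — and that distinguishes $\Phi$ from the naive ``fill in the missing children'' completion, which is \emph{not} injective — is that a present child always maps to a \emph{nontrivial} complete tree (its image is an $n$-tuple, so its root has $n$ children), whereas an omitted child maps to a bare leaf; in particular even a present child that is itself a leaf of $\mathcal{T}$ is sent to the depth-one complete tree rather than to a leaf. I would prove this distinguishability by induction and use it to confirm that $\Psi$ inverts $\Phi$. The only remaining delicate point is the boundary case at the bottom of the recursion: the root-only complete tree $\bullet$ has no proper decomposition and is matched with the empty symbol $*$ rather than with a genuine general tree, so the cleanest form of the correspondence is $\{*\}\sqcup\mathcal{G}\cong\mathcal{C}$; I would resolve this by fixing a convention for the trivial tree (equivalently, pairing the root-only complete tree with that one extra structure), after which the correspondence between all other trees is exactly the natural recursive bijection described above.
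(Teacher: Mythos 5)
Your proposal is correct and is essentially the paper's own construction: your $\Phi$ (fill every omitted child with a bare leaf, recursively, so that even a present leaf-child acquires $n$ leaf children) is exactly the paper's ``grow $n$ leaves at each node'' map, and your $\Psi$ (discard bare-leaf children, recurse into nontrivial ones) is exactly the paper's ``remove all leaves to get the parent tree'' map. The only difference is presentational — you phrase the maps by structural recursion and handle the root-only tree explicitly via $\{*\}\sqcup\mathcal{G}\cong\mathcal{C}$, which is the same resolution the paper implicitly adopts by letting the empty tree correspond to the root-only complete tree.
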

\begin{proof}
For any complete context tree, we can remove all the leaves to get a ``parent tree'' which is an arbitrary context tree. On the opposite, given an arbitrary context tree, by making sure each node grows $n$ leaves we get a complete context tree.  
\end{proof}
The total number of complete context trees on $A$ with length no more than $\ell$ is $\Omega(n^{n^\ell})$.

%%%%%%%%%%%%%%%%%%%%
% perfect-memory
  \section{Perfect-memory Context Trees} \label{sec:perfect-memory}
%%%%%%%%%%%%%%%%%%%%
  In this section we define perfect-memory context trees and analyze their properties.
\begin{definition}
\label{definition4} A context tree $\mathcal{T}$ on the alphabet $A=\{a_{1}, \, \dots, a_{n}\}$ is called \textit{``perfect-memory"} context tree (or $\mathcal{T}$ has ``perfect-momory'', denoted as $\mathcal{T}\in \mathcal{PM}$) if  $\forall \ c\in \mathcal{T}^*, i \in \{1,...,n\}, \ \exists \ u \in \mathcal{T}^*$, 
 s.t. $u\prec \overline{ca_i}$. \\
\end{definition}
  Not all complete context trees have perfect-memory. The binary context tree in the figure below provides an example for this. 
  
 % Fig 2.
  \Tree[.Root [.0 [.0 ]
               [.1 ]]
                 [.1 [.0 [.0 ]
                 [.1 [.0 ]
                 [.1 ]]]
                [.1 ]]]
  \vspace{0.2cm}              
   % Definition of subtree T_i
	
\noindent The tree is obviously complete, and we have $\mathcal{T}^*$=$\{00,10,001,0101,1101,11\}$.  However, for the context $c=10$ in $\mathcal{T}^*$ and letter $a=1$, the concatenation is $\overline{ca}=101$, and there is no context $u\in \mathcal{T}^*$ that is a postfix of $101$. Thus, the tree does not have perfect-memory. 

Nevertheless, the converse is true, and perfect-memory implies completeness.

   % Definition of subtree T_i

\begin{property}\label{property1}
A perfect-memory context tree is complete.
\end{property}
\begin{proof}
Assume there exists a perfect-memory context tree $\mathcal{T}$ that is not complete. Let $c$ be a minimum length string that makes $\mathcal{T}$ not complete (a shortest string that $\mathcal{T}$ is missing to be complete), i.e. $c \notin \mathcal{T}^*$, the leaf of path $\mathcal{P}_c$ has at least one sibling in $\mathcal{T}$ and
\begin{quote}
$\forall c'$ s.t. $c\prec c'$, we have $c'\notin \mathcal{T}^*$.\hfill (i)
\end{quote}
Therefore, by the definition of context tree and that the leaf of context path $\mathcal{P}_c$ has a sibling in $\mathcal{T}$, we have 
\begin{quote} 
$\forall c'$ s.t. $c'\prec c $, we have $c' \notin \mathcal{T}^*$\hfill (ii)
\end{quote} 
Let $c=\overline{wa_c}$, i.e. the last letter of c is $a_c$. If $c=a_c$, i.e. $w$ is the empty string, pick some $t\in \mathcal{T}^*$. Because $\mathcal{T}$ has perfect-memory, $\exists u_o \in \mathcal{T}^*$ s.t. $u_o \prec \overline{ta_c}$. But then $c\prec u_o$, which contradicts (i). Therefore, $w$ cannot be the empty string.

If $w \in \mathcal{T}^*$, then by Definition \ref{definition4}, $\exists u\in \mathcal{T}^*$ s.t. $u\prec \overline{wa_c}=c$. But this is a contradiction of (ii), therefore, $w \notin \mathcal{T}^*$.

Since $w$ is shorter than $c$, and using the assumption that $c$ is the shortest string that makes $\mathcal{T}$ not complete, we know that there exists $u' \in \mathcal{T}^*$ s.t. $u'\prec w$ or $w\prec u'$. By Definition \ref{definition4}, $\exists u_1\in \mathcal{T}^*$ s.t. $u_1\prec \overline{u'a_c}$.

If $u'\prec w$, then $u_1\prec \overline{u'a_c}\prec \overline{wa_c}=c$. This contradicts (ii), thus, $w\prec u'$. Therefore, we have $c=\overline{wa_c}\prec \overline{u'a_c}$ and $u_1\prec \overline{u'a_c}$. $u_1$ and $c$ are both postfixes of $\overline{u'a_c}$, then one must be the postfix of the other. If $u_1\prec c$, contradiction of (ii). If $c\prec u_1$, this is a contradiction of (i).
Therefore, the assumption always leads to a contradiction, which means that a perfect-memory context tree has to be complete. 
\end{proof}

\begin{property}\label{property2}
 If $\mathcal{A}$, $\mathcal{B}$ are perfect-memory context trees, then $\mathcal{A}\cup \mathcal{B}$ and $\mathcal{A}\cap \mathcal{B}$ are also perfect-memory context trees.
\end{property}

\begin{proof} Let $\mathcal{A}$ and $\mathcal{B}$ be perfect-memory context trees. Let $c \in (\mathcal{A} \cup \mathcal{B})^{*}$, and $i \in \{1, \dots, n\}$. We need to find a context $u \in (\mathcal{A}\cup\mathcal{B})^{*}$ such that $u\prec \overline{ca_{i}}$ . We separate into three cases.
 
 (i) If $c\in \mathcal{A}^*\cap \mathcal{B}^*$, $\exists a\in \mathcal{A}^*, b\in \mathcal{B}^*$, s.t. $a\prec \overline{ca_i}$ and $b\prec \overline{ca_i}$, because of $\mathcal{A}$ and $\mathcal{B}$ have perfect-memory and Definiton \ref{definition4}. $a$ and $b$ are both postfixes of $\overline{ca_i}$, if $b\prec a$, by Definition \ref{definition3}, $a\in (\mathcal{A}\cup \mathcal{B})^*$, done. If $a\prec b$, then $b\in (\mathcal{A}\cup \mathcal{B})^*$, done. 
 
 (ii) If $c\in \mathcal{A}^*\setminus \mathcal{B}^*$, $\exists b\in \mathcal{B}^*$, s.t. $b\prec c$ and $b\neq c$. $\exists a\in \mathcal{A}^*$, s.t. $a\prec \overline{ca_i}$ since $\mathcal{A}^*$ has perfect-memory. $\exists b'\in \mathcal{B}^*$, s.t. $b'\prec \overline{ba_i}$ since $\mathcal{B}^*$ has perfect-memory.
 Then we have $b'\prec\overline{ba_i}\prec \overline{ca_i}$ and $a\prec\overline{ca_i}$.
 If $a\prec b'$, then $b'\in (\mathcal{A}\cup \mathcal{B})^*$, done. If $b'\prec a$, then $a\in (\mathcal{A}\cup \mathcal{B})^*$, done.
 
 (iii) If $c\in \mathcal{B}^*\setminus \mathcal{A}^*$, proceed similarly to (ii).
 
 Therefore, $\mathcal{A}\cup \mathcal{B}$ is a perfect-memory context tree.
 Similarly, we can prove $\mathcal{A}\cap \mathcal{B}$ is a perfect-memory context tree.
 \end{proof}

 Since perfect-memory is closed under intersection, we can define the following.
\begin{definition}\label{definition5}
Let $\mathcal{T}$ be a context tree. The \textit{``perfect-memory closure of $\mathcal{T}$''}, denoted as $\mathcal{\overline{T}}$, is the intersection of all the perfect-memory context trees that contain $\mathcal{T}$, i.e. $$\overline{\mathcal{T}}=\bigcap_{\substack{\mathcal{T}\subseteq \mathcal{G}_i, \\ \mathcal{G}_i \in \mathcal{PM}}}{\mathcal{G}_i}$$
\end{definition}

Having perfect-memory is preserved under intersection in view of Property \ref{property2}, so the perfect-memory closure $\mathcal{\overline{T}}$ of a context tree $\mathcal{T}$ has perfect-memory. Thus, by definition, $\mathcal{\overline{T}}$ is the minimal context tree that contains $\mathcal{T}$ and has perfect-memory.

\begin{notation}
\label{notation1}
Let $C(\mathcal{T})$ denote the minimum complete context tree that contains $\mathcal{T}$. \\
\end{notation}
\begin{property}
$\overline{C(\mathcal{T})}=\overline{\mathcal{T}}$.
\end{property}
\begin{proof}
For any context tree $\mathcal{T}$, we have $\overline{C(\mathcal{T})}\subseteq \overline{C(\overline{\mathcal{T}}})=\overline{\overline{\mathcal{T}}}= \overline{\mathcal{T}}\subseteq \overline{C(\mathcal{T})}$ because of Property~\ref{property1}. Therefore, $\overline{C(\mathcal{T})}=\overline{\mathcal{T}}$.
\end{proof}

\begin{notation}
\label{notation2}
For a context tree $\mathcal{T}$ and an $i\in \{1,\dots,n \}$, by {\bf $\mathcal{T}_i$} we denote the subtree of $\mathcal{T}$ whose root is $a_{i}$. Note that $\mathcal{T}_{i}$ can be empty, and $\mathcal{T}_{i}^*=\{u\in S|\ \overline{ua_{i}}\in \mathcal{T}^*\}$. Moreover, let $\mathcal{T}_{\star}$ denote set of all subtrees of $\mathcal{T}$ that the root is a node of $\mathcal{T}$, i.e. $\mathcal{T}_{\star}=\{\mathcal{S}:\mathcal{S}=((\mathcal{T}_{i_1})_{i_2})...\}$. The figure below serves as an illustration. \\

\end{notation}
 \includegraphics[scale=0.9]{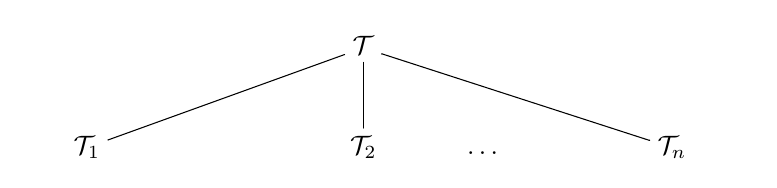}\\
 %\vspace{-.5cm}
 
Now we are ready to prove our first main result.

 \begin{proof}[Proof of Theorem\ref{theorem2}]
 Let $\mathcal{T}$ be a complete context tree. It follows from Lemma~\ref{lemma1} that when $\mathcal{T}_i\neq \emptyset,\ i\in \{1,...,n\}$, the following statements are equivalent.

 (i) $ \mathcal{T}_{i} \subseteq \mathcal{T}$.

 (ii) $\forall c' \in \mathcal{T}^*,\ \exists\  c\in \mathcal{T}_{i}^*$, s.t.  $c\prec c'$.
 
  \noindent Therefore, since
 $\emptyset \subseteq \mathcal{T}$, the statement of the theorem about the complete context tree $\mathcal{T}$ can be reformulated as follows: $\mathcal{T}$ has perfect-memory if and only if   $\ \forall i\in\{1,...,n\}$, either $\mathcal{T}_i=\emptyset$ or $\forall c' \in \mathcal{T}^*,\ \exists\  c\in \mathcal{T}_{i}^*$, s.t.  $c\prec c'$.
 
 {\bf Necessity. }Assume that $\mathcal{T}$ has perfect-memory. Let $i\in\{1,...,n\}$ and $c' \in \mathcal{T}^*$. By Definition \ref{definition4}, $\exists u\in \mathcal{T}^*$, s.t. $u\prec \overline{c'a_i}$. Let $c\in S$ so that $u=\overline{ca_i}$. If $a_i\in \mathcal{T}^*$, then $\mathcal{T}_i=\emptyset$. Otherwise, if $a_i \notin \mathcal{T}^*$, then c is not the empty string and $c\in \mathcal{T}_i^*$. Since $\overline{ca_i}=u\prec \overline{c'a_i}$, we have $c\prec c'$.
 
 {\bf Sufficiency.} Assume that $ \forall i\in \{1,...,n\}$, either $\mathcal{T}_i=\emptyset$ or $\forall c' \in \mathcal{T}^*,\ \exists\  c\in \mathcal{T}_{i}^*$, s.t.  $c\prec c'$. Let $i\in\{1,...,n\}$ and $c' \in \mathcal{T}^*$. We distinguish two cases, and show in both that there exists $u\in \mathcal{T}^*$ with $u\prec \overline{c'a_i}$. If $\mathcal{T}_i=\emptyset$, then $a_i\in \mathcal{T}^*$, and hence we can choose $u=a_i$. Otherwise, if $\mathcal{T}_i\neq \emptyset$, by assumption, $\exists c\in \mathcal{T}_i^*$, s.t. $c\prec c'$. In this case, we can choose $u=\overline{ca_i}$. Thus, $\mathcal{T}$ has perfect-memory.
 
% Therefore, since
% $\emptyset \subseteq \mathcal{T}$, the theorem is equivalent to $\mathcal{T}$ as a complete context tree has perfect-memory if and only if   $\ \forall i\in\{1,...,n\}$, $\forall c' \in \mathcal{T}^*,\ \exists\  c\in \mathcal{T}_{i}^*$, s.t.  $c\prec c'$ or $\mathcal{T}_i=\emptyset$.
% 
% {\bf Necessary. }Assume $\mathcal{T}$ has perfect-memory. Let $i\in\{1,...,n\}$ and $c' \in \mathcal{T}^*$, by Definition \ref{definition4}, $\exists u\in \mathcal{T}^*$, s.t. $u\prec \overline{c'a_i}$. Let $u=\overline{ca_i}$, if $a_i\in \mathcal{T}^*$, then $\mathcal{T}_i=\emptyset$. If $a_i \notin \mathcal{T}^*$, then c is not the empty string, and $c\in \mathcal{T}_i$. Since $\overline{ca_i}=u\prec \overline{c'a_i}$, we have $c\prec c'$.
% 
% {\bf Sufficient.} Assume $ \forall i\in \{1,...,n\}$, $\forall c' \in \mathcal{T}^*,\ \exists\  c\in \mathcal{T}_{i}^*$, s.t.  $c\prec c'$ or $\mathcal{T}_i=\emptyset$. Let $i\in\{1,...,n\}$ and $c' \in \mathcal{T}^*$. If $\mathcal{T}_i=\emptyset$, then $a_i\in \mathcal{T}^*$, $\mathcal{T}$ has perfect-memory. If $\mathcal{T}_i\neq \emptyset$, by assumption, $\exists c\in \mathcal{T}_i^*$, s.t. $c\prec c'$. Let $u=\overline{ca_i}$, then $u \in \mathcal{T}^*$ and $u\prec \overline{c'a_i}$, thus, $\mathcal{T}$ has perfect-memory.
% 
\end{proof}
 
Now we discuss some corollaries of the characterization in Theorem~\ref{theorem2}.

\begin{notation}\label{notation3} 
If a context $c=\overline{c_ia_i}$, let $c_i$ denote the string of the context c without the last letter $a_i$.\\
\end{notation}
\begin{corollary}\label{corollary1}
A context tree $\mathcal{T}$ has perfect-memory if and only if the following two conditions hold: [1] $\mathcal{T}$ is complete, and [2] $\forall i\in \{1,...,n\},\ \forall c_i,\ s.t.\ \overline{c_ia_i}\in \mathcal{T}^*,\ \exists \ u \in \mathcal{T}^*\ s.t.\ c_i\prec u$.

\end{corollary}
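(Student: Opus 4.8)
The plan is to deduce Corollary~\ref{corollary1} from Theorem~\ref{theorem2}, by observing that condition~[2] is nothing more than Definition~\ref{definition2} unwound for each child subtree $\mathcal{T}_i$. The two facts I would lean on are Property~\ref{property1} (perfect-memory forces completeness) and Notation~\ref{notation2} (which records $\mathcal{T}_i^*=\{c_i\in S:\overline{c_ia_i}\in\mathcal{T}^*\}$). Notably, the main argument will not even need Lemma~\ref{lemma1}, since condition~[2] coincides with the defining clause of $\mathcal{T}_i\subseteq\mathcal{T}$ rather than with its postfix reformulation.

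First I would settle completeness. A perfect-memory tree is complete by Property~\ref{property1}, so condition~[1] is forced in the ``only if'' direction and is assumed outright in the ``if'' direction; either way I may treat $\mathcal{T}$ as complete, which is exactly the hypothesis Theorem~\ref{theorem2} demands. Condition~[1] cannot be dropped, as condition~[2] by itself does not entail completeness (a root with a single leaf child satisfies [2] vacuously yet is neither complete nor perfect-memory). With completeness secured, the task collapses to showing that for complete $\mathcal{T}$ the criterion of Theorem~\ref{theorem2}, namely $\forall i,\ \mathcal{T}_i\subseteq\mathcal{T}$, is equivalent to condition~[2].

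The core step is a direct translation. Fixing $i$ and substituting $\mathcal{T}_i^*=\{c_i:\overline{c_ia_i}\in\mathcal{T}^*\}$ into Definition~\ref{definition2}, the containment $\mathcal{T}_i\subseteq\mathcal{T}$ reads: for every $c_i$ with $\overline{c_ia_i}\in\mathcal{T}^*$ there exists $u\in\mathcal{T}^*$ with $c_i\prec u$. This is word-for-word condition~[2] at the index $i$, and quantifying over $i$ delivers the equivalence. Chaining the implications: perfect-memory yields [1] by Property~\ref{property1} and, via Theorem~\ref{theorem2} together with this translation, [2]; conversely, [1] and [2] give $\forall i,\ \mathcal{T}_i\subseteq\mathcal{T}$, whereupon Theorem~\ref{theorem2} returns perfect-memory.

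I do not anticipate a genuine obstacle; the content is entirely definition-chasing, and the only place to tread carefully is the degenerate index where $a_i$ is a leaf child of the root (the case written $\mathcal{T}_i=\emptyset$ in the proof of Theorem~\ref{theorem2}). There the sole context ending in $a_i$ is $a_i$ itself, so $c_i$ is the empty string; since the empty string is a postfix of every context, condition~[2] holds automatically, matching the trivial containment $\mathcal{T}_i\subseteq\mathcal{T}$. Hence the translation is valid uniformly in $i$ and the corollary follows with no further estimate. One could instead argue straight from Definition~\ref{definition4}, but this amounts to re-running the proof of Theorem~\ref{theorem2} after converting condition~[2] into the postfix form of Lemma~\ref{lemma1}, so routing through Theorem~\ref{theorem2} is the more economical choice.
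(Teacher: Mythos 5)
Your proof is correct and takes essentially the same route as the paper, which likewise deduces the corollary in one line from Theorem~\ref{theorem2} (citing Lemma~\ref{lemma1} alongside it). Your refinement---that Lemma~\ref{lemma1} is actually dispensable because condition [2] is verbatim the containment $\mathcal{T}_i\subseteq\mathcal{T}$ of Definition~\ref{definition2} applied via Notation~\ref{notation2}, with Property~\ref{property1} supplying completeness in the ``only if'' direction---is sound, and your handling of the degenerate case $\mathcal{T}_i=\emptyset$ matches the convention used in the paper's proof of Theorem~\ref{theorem2}.
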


\begin{proof} This corollary follows by Theorem~\ref{theorem2} and Lemma~\ref{lemma1}. \end{proof}

\begin{corollary}\label{corollary2}
A context tree $\mathcal{T}$ has perfect-memory if and only if the following two conditions hold: [1] $\mathcal{T}$ is complete, and [2] $\forall u$ for which $\exists \ w\in S, \ \overline{uw}\in \mathcal{T}^*$, we have $\exists \ w'\in S\  s.t.\ \overline{w'u} \in \mathcal{T}^*$.\end{corollary}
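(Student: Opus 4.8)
The plan is to deduce Corollary~\ref{corollary2} from Corollary~\ref{corollary1}, by observing that both statements share condition [1] (completeness) verbatim, so it suffices to prove that their respective conditions [2] are equivalent. Throughout, I would call a string $u$ a \emph{prefix of a context} if $\exists\,w\in S$ with $\overline{uw}\in\mathcal{T}^*$, and note that condition [2] of Corollary~\ref{corollary2} asserts exactly that every prefix of a context is itself a postfix of some context (the conclusion $\exists\,w'$ with $\overline{w'u}\in\mathcal{T}^*$ is precisely $u\prec\overline{w'u}$). Condition [2] of Corollary~\ref{corollary1}, meanwhile, is the special case in which the prefix under consideration is a context with its last letter deleted.

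One direction is immediate: assuming condition [2] of Corollary~\ref{corollary2}, for any context $\overline{c_ia_i}\in\mathcal{T}^*$ the string $c_i$ is a prefix of a context (take $w=a_i$), hence a postfix of some context, which is exactly condition [2] of Corollary~\ref{corollary1}.

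For the reverse direction -- the main work -- I would prove by induction on $|w|$ the statement: for every $u$ and every $w$ with $\overline{uw}\in\mathcal{T}^*$, the string $u$ is a postfix of some context. The base case $|w|=0$ is trivial, since then $u\in\mathcal{T}^*$ and $u\prec u$. For the inductive step, write $w=\overline{w_0a}$ with $a$ its last letter, so that $\overline{uw}=\overline{(\overline{uw_0})a}\in\mathcal{T}^*$; applying condition [2] of Corollary~\ref{corollary1} to this context yields a context $c^{(1)}\in\mathcal{T}^*$ with $\overline{uw_0}\prec c^{(1)}$, say $c^{(1)}=\overline{v\,uw_0}$. Then $\overline{vu}$ is a prefix of the context $c^{(1)}$ whose remaining tail $w_0$ has length $|w|-1$, so the induction hypothesis provides a context $c^{(2)}$ with $\overline{vu}\prec c^{(2)}$; since $u\prec\overline{vu}\prec c^{(2)}$, transitivity of $\prec$ gives $u\prec c^{(2)}$, as required.

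The main obstacle is this induction: one must shrink the unresolved tail $w$ while tracking how the prefix grows from $u$ to $\overline{vu}$, and then recover the postfix claim for $u$ from that for $\overline{vu}$ by transitivity -- which is why the induction hypothesis must be quantified over all prefixes, not just the fixed one. I expect that no completeness hypothesis is actually needed for the equivalence of the two conditions [2]; completeness enters only through condition [1], which both corollaries carry identically, so the result then follows from the already-established Corollary~\ref{corollary1}.
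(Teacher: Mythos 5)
Your proposal is correct, but its harder direction is organized differently from the paper's. The easy direction (condition [2] of Corollary~\ref{corollary2} implies condition [2] of Corollary~\ref{corollary1}, by taking $w=a_i$) is exactly the paper's sufficiency argument. For the other direction, however, the paper does not pass through Corollary~\ref{corollary1} at all: it assumes perfect memory, invokes Property~\ref{property1} to get completeness, and argues by contradiction --- if $u$ with $\overline{uw}\in\mathcal{T}^*$ were a postfix of no context, completeness forces $u=\overline{w_0u_0}$ with $u_0\in\mathcal{T}^*$ and $w_0$ nonempty, and then a chain of contexts $u_k\prec\overline{u_{k-1}a_{w(k)}}$ built by applying Definition~\ref{definition4} along the letters of $w$ (peeled from the front) yields $u_l\prec\overline{uw}$ with both strings contexts, forcing $u_l=\overline{uw}$ and a length contradiction. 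You instead prove the implication ``condition [2] of Corollary~\ref{corollary1} $\Rightarrow$ condition [2] of Corollary~\ref{corollary2}'' directly, by induction on $|w|$, peeling the \emph{last} letter of $w$ and quantifying the induction hypothesis over all prefixes so that transitivity of $\prec$ closes the step; your induction is valid (the step correctly applies the hypothesis to the longer prefix $\overline{vu}$ with the shorter tail $w_0$). What your route buys: it is a direct argument rather than a contradiction, it shows the two conditions [2] are equivalent for \emph{arbitrary} context trees (completeness enters only through the shared condition [1]), and it confines all use of perfect memory to the already-proved Corollary~\ref{corollary1}. What the paper's route buys: it works straight from Definition~\ref{definition4}, so the necessity direction is self-contained and does not depend on Corollary~\ref{corollary1} being available.
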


\begin{proof} The sufficiency follows by Corollary \ref{corollary1} just letting $w=a_i$.

 To prove the necessity now we assume $\mathcal{T}$ has perfect-memory, by Property~\ref{property1} we have $\mathcal{T}$ is complete. Take  $\forall u\ s.t.\ c=\overline{uw}\in \mathcal{T}^*$, 
when $w$ is the empty string, then it is trivial that when $w'$ is the empty string, $\overline{w'u} \in \mathcal{T}^*$. When $w$ is not the empty string, we assume the necessary requirement is not true,
 i.e. $\exists u_0\in \mathcal{T}^*$, s.t.  $u=\overline{w_0u_0}$ where $w_0$ is not the empty string, since $\mathcal{T^*}$ is complete. Let $w=\overline{a_{w(1)}a_{w(2)}\dots a_{w(l)}}$, then because $\mathcal{T}$ has perfect-memory, we have $\exists \ u_{k}\in \mathcal{T}^*$,  for $k\in \{1,...,l\}$, s.t. $u_k \prec \overline{u_{k-1}a_{w(k)}}$. Therefore, $u_l\prec \overline{w_0u_l}\prec \overline{w_0u_{l-1}a_{w(l)}}\prec \overline{w_0u_{l-2}a_{w(l-1)}a_{w(l)}}\prec \dots \prec \overline{w_0u_0a_{w(1)}a_{w(2)}\dots a_{w(l)}}=\overline{uw}=c$. Since $c$ and $u_l$ are both contexts in $\mathcal{T}^*$, and $\mathcal{T}$ is complete, $c=u_l$, this is a contradiction of $w_0$ is not the empty string. Hence, the necessary requirement is true.
 \end{proof}

\begin{corollary} \label{corollary3}
Let $\mathcal{T}$ be a complete context tree, then $\overline{\mathcal{T}}$ is the union at the root of all the subtrees of $\mathcal{T}$, i.e. $$\overline{\mathcal{T}}=\bigcup_{\mathcal{S}\in\mathcal{T_{\star}}}{\mathcal{S}} .$$
\end{corollary}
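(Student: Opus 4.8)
The plan is to set $\mathcal{U} := \bigcup_{\mathcal{S}\in\mathcal{T}_\star}\mathcal{S}$ and to prove $\overline{\mathcal{T}}=\mathcal{U}$ through the two inclusions $\overline{\mathcal{T}}\subseteq\mathcal{U}$ and $\mathcal{U}\subseteq\overline{\mathcal{T}}$. Before starting I would record three routine facts about the order $\subseteq$: that it is transitive; that it is monotone under the subtree operation, i.e.\ $\mathcal{A}\subseteq\mathcal{B}$ implies $\mathcal{A}_i\subseteq\mathcal{B}_i$ for every $i$ (immediate from Definition~\ref{definition2} together with $\mathcal{A}_i^*=\{u:\overline{ua_i}\in\mathcal{A}^*\}$); and that $\cup$ is the least upper bound, so that, calling a string $q$ a \emph{node} of a complete tree $\mathcal{R}$ when $q\prec r$ for some $r\in\mathcal{R}^*$, the node set of $\mathcal{U}$ equals the union of the node sets of the trees $\mathcal{S}\in\mathcal{T}_\star$. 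I will also use the evident closure of $\mathcal{T}_\star$ under the subtree operation: if $\mathcal{S}\in\mathcal{T}_\star$ then $\mathcal{S}_i\in\mathcal{T}_\star$ for every $i$.

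For $\overline{\mathcal{T}}\subseteq\mathcal{U}$ it suffices, by Definition~\ref{definition5}, to show that $\mathcal{U}$ is perfect-memory and contains $\mathcal{T}$. Containment is immediate since $\mathcal{T}\in\mathcal{T}_\star$ and $\cup$ is an upper bound, and $\mathcal{U}$ is complete as a union of complete trees. To get perfect memory I would verify condition [2] of Corollary~\ref{corollary1}: suppose $\overline{c_ia_i}\in\mathcal{U}^*$ and seek $u\in\mathcal{U}^*$ with $c_i\prec u$. Since $\overline{c_ia_i}$ is in particular a node of $\mathcal{U}$, it is a node of some $\mathcal{S}\in\mathcal{T}_\star$, say $\overline{c_ia_i}\prec d$ with $d\in\mathcal{S}^*$; writing $d=\overline{e\,c_i a_i}$ gives $\overline{e\,c_i}\in\mathcal{S}_i^*$, so $c_i$ is a node of $\mathcal{S}_i$. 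As $\mathcal{S}_i\in\mathcal{T}_\star$, $c_i$ is a node of $\mathcal{U}$, hence a postfix of some leaf $u\in\mathcal{U}^*$, which is exactly what is required.

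For the reverse inclusion I would first show, by induction on the number of subtree steps, that every $\mathcal{S}\in\mathcal{T}_\star$ satisfies $\mathcal{S}\subseteq\overline{\mathcal{T}}$. The base case $\mathcal{T}\subseteq\overline{\mathcal{T}}$ holds by the definition of the closure. For the step, $\mathcal{S}\subseteq\overline{\mathcal{T}}$ gives $\mathcal{S}_i\subseteq(\overline{\mathcal{T}})_i$ by monotonicity, while $(\overline{\mathcal{T}})_i\subseteq\overline{\mathcal{T}}$ by Theorem~\ref{theorem2} applied to the perfect-memory (hence complete) tree $\overline{\mathcal{T}}$; transitivity then yields $\mathcal{S}_i\subseteq\overline{\mathcal{T}}$. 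Finally, every context of $\mathcal{U}$ is a leaf of the overlay and therefore a context of some $\mathcal{S}\in\mathcal{T}_\star$; since $\mathcal{S}\subseteq\overline{\mathcal{T}}$, that context is a postfix of a context of $\overline{\mathcal{T}}$, and Definition~\ref{definition2} gives $\mathcal{U}\subseteq\overline{\mathcal{T}}$. Combining the two inclusions proves the corollary.

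I expect the main obstacle to be the perfect-memory half, i.e.\ the middle paragraph: the argument must couple the iterated union with the subtree operation, and the clean way to do this is through the node/overlay description of $\cup$ together with the closure of $\mathcal{T}_\star$ under $\mathcal{S}\mapsto\mathcal{S}_i$. Everything else — completeness of unions, monotonicity, transitivity, and the inductive reverse inclusion — is bookkeeping once these auxiliary facts are in place; the one point that deserves care is checking that the node set of a finite union really is the union of the node sets, which is exactly where the least-upper-bound property of $\cup$ enters.
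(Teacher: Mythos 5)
Your proof is correct, but it departs from the paper's argument in a substantive way, most visibly in the reverse inclusion. Both proofs share the same skeleton: show $\mathcal{U}=\bigcup_{\mathcal{S}\in\mathcal{T}_\star}\mathcal{S}$ is complete, perfect-memory, and contains $\mathcal{T}$ (giving $\overline{\mathcal{T}}\subseteq\mathcal{U}$ by minimality of the closure), then show $\mathcal{U}\subseteq\overline{\mathcal{T}}$. For the first half the difference is mainly in machinery: the paper verifies the perfect-memory condition of Corollary~\ref{corollary2} for $\mathcal{U}$ by lifting a context of $\mathcal{U}$ up to a context of $\mathcal{T}$ (appending $w_\alpha$) and then passing to the subtree $\mathcal{T}_\beta$ rooted at that suffix, whereas you verify Corollary~\ref{corollary1} directly via the node-set description of $\cup$ and the closure of $\mathcal{T}_\star$ under $\mathcal{S}\mapsto\mathcal{S}_i$; these are morally the same idea. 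The genuine divergence is in the second half: the paper argues by contradiction, applying Corollary~\ref{corollary2} to $\overline{\mathcal{T}}$ to show that a context $c\in\mathcal{U}^*$ with no postfix-extension in $\overline{\mathcal{T}}^*$ would force one context of $\overline{\mathcal{T}}$ to be a proper postfix of another. You instead prove directly, by induction on the number of subtree steps, that every $\mathcal{S}\in\mathcal{T}_\star$ satisfies $\mathcal{S}\subseteq\overline{\mathcal{T}}$, using monotonicity of $\subseteq$ under the subtree operation together with $(\overline{\mathcal{T}})_i\subseteq\overline{\mathcal{T}}$, which is exactly Theorem~\ref{theorem2} applied to the perfect-memory (hence complete) tree $\overline{\mathcal{T}}$. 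Your route buys a cleaner, contradiction-free argument that reuses the paper's main theorem and makes the lattice-theoretic content (transitivity, monotonicity, least-upper-bound) explicit; the paper's route is more self-contained string manipulation and does not require setting up those auxiliary order-theoretic facts. The one step you rightly flag --- that the node set of a finite union of complete trees is the union of the node sets --- does hold (completeness guarantees every node of a summand sits below a context of the union), so your plan carries through without gaps.
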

See Notation~\ref{notation2}.
\begin{proof} It is clear that any $\mathcal{S}$, $\mathcal{S}\in \mathcal{T}_{\star}$, is complete and $\bigcup_{\mathcal{S}\subseteq\mathcal{T}_{\star}}{\mathcal{S}} $ is complete.
$\forall c_0$ s.t. $\overline{c_0w_0} \in (\bigcup_{\mathcal{S}\subseteq\mathcal{T}_{\star}}{\mathcal{S}})^*$, then $\exists \mathcal{T}_{\alpha}\in\mathcal{T}_{\star}$ such that $\overline{c_0w_0}\in \mathcal{T}_{\alpha}^*$. Therefore, $\overline{c_0w_0w_{\alpha}}\in \mathcal{T}^*$,
then $\exists \mathcal{T}_{\beta}\in\mathcal{T}_{\star}$ such that $c_0\in \mathcal{T}_{\beta}^*$. By Definition~\ref{definition2}, $\exists t\in (\bigcup_{\mathcal{S}\in\mathcal{T_{\star}}}{\mathcal{S}})^*$ s.t. $c_0 \prec t$ since $\mathcal{T}_{\beta}\subseteq \bigcup_{\mathcal{S}\in\mathcal{T_{\star}}}{\mathcal{S}}$. By Corollary \ref{corollary2} we know that $\bigcup_{\mathcal{S}\in\mathcal{T_{\star}}}{\mathcal{S}}$ has perfect-memory. Since $\mathcal{T}\subseteq \bigcup_{\mathcal{S}\in\mathcal{T_{\star}}}{\mathcal{S}} $, we have $\overline{\mathcal{T}}\subseteq \overline{\bigcup_{\mathcal{S}\in\mathcal{T_{\star}}}{\mathcal{S}} }=\bigcup_{\mathcal{S}\in\mathcal{T_{\star}}}{\mathcal{S}}$.

On the other hand, $\forall c \in  (\bigcup_{\mathcal{S}\in\mathcal{T_{\star}}}{\mathcal{S}})^*$, $\exists \ \mathcal{T}_{\gamma}\in\mathcal{T}_{\star}$, s.t. $c\in \mathcal{T}_{\gamma}^*$, thus $\overline{cw_{\gamma}} \in \mathcal{T}^*$. Assume there is no $c'\in \overline{\mathcal{T}}^*$ s.t. $c\prec c'$. Then $\exists u\neq \emptyset$, s.t. $c=\overline{uc'}$ and $c' \in \overline{\mathcal{T}}^*$, since $\overline{\mathcal{T}}$ is complete. We have $\overline{uc'w_{\gamma}}\in \mathcal{T}^*$, $\exists w_1$ s.t. $\overline{w_1uc'w_{\gamma}}\in \overline{\mathcal{T}}^*$, since $\mathcal{T}\subseteq \overline{\mathcal{T}}$. $\overline{\mathcal{T}}$ has perfect-memory, by Corollary~\ref{corollary2}, $\exists w_2\in S$ s.t. $\overline{w_2w_1uc'}\in \overline{\mathcal{T}}^*$. This is a contradiction of  $c' \in \overline{\mathcal{T}}^*$ and $u\neq\emptyset$. Therefore, $\exists c'\in \overline{\mathcal{T}}^*$ s.t. $c\prec c'$. By Lemma~\ref{lemma1}, we have  $\bigcup_{\mathcal{S}\in\mathcal{T_{\star}}}\mathcal{S}\subseteq \overline{\mathcal{T}}$. Therefore, $\overline{\mathcal{T}}=\bigcup_{\mathcal{S}\in\mathcal{T_{\star}}}{\mathcal{S}} $.

\end{proof}

%%%%%%%%%%%%%%%%%%%%
  \section{A Linear Algorithm For finding Perfect-memory Closure}\label{sec:alg} 
  %%%%%%%%%%%%%%%%%%%%
  
  \begin{notation}
\label{notation4}
Given a context tree $\mathcal{T}$, we have the following notations:\\
\begin{tabular}{r p{10cm}}
 $\ell$ & the depth of the context tree $\mathcal{T}$. \\
 $n(\mathcal{T})$ & the number of leaves in the $\mathcal{T}^*$.\\
 \end{tabular}\\
 \end{notation}

Given a complete context tree $\mathcal{T}$, we can find $\overline{\mathcal{T}}$ using Corollary~\ref{corollary3}. But it is the most redundant procedure unless we do it in parallel. The running time is $ O(\ell^2\cdot n( \mathcal{T}))$. 
We give a fast algorithm to calcuate $\overline{\mathcal{T}}$ without using parallel computing. It achieves the perfect-memory closure by extending the leaves of $\mathcal{T}$ using the equivalent conditions in Corollary~\ref{corollary1}. \\

\noindent {\bf Trimming Algorithm.} . \\
{\bf Input:} A complete context tree $\mathcal{T}$ from an alphabet $A=\{a_{1}, \, \dots, \, a_{n}\}$.\\
{\bf Output:} $\overline{\mathcal{T}}$, the perfect-memory closure of $\mathcal{T}$.\\
{\bf Ideas:} \\
1. Applying the equivalent conditions in Corollary 1 to enlarge $\mathcal{T}$ by adding the corresponding leaves or branches. \\
2. Visiting nodes in a depth decreasing order. Remove the branch after visited, terminated when there are no more unvisted nodes. \\
3. In the beginning of this algorithm we trim the complete context into its parent tree and in end we saturate the context tree to a complete tree by grow children for each node on the alphabet $A$.\\
{\bf Initialization:} Set $\mathcal{T}_p $ be the parent tree of $\mathcal{T}$. Set $\overline{\mathcal{T}}=\emptyset$.\\
{\bf Iteration:} \\
Step 1: If the length of $\mathcal{T}_p$ is less or equal to 1, go to step 4. Otherwise go to step 2.\\
Step 2: Let c be a deepest node in $\mathcal{T}.$ Let $\mathcal{P}_c^- $ denote the path $\mathcal{P}_c$ without the last node. If {$\mathcal{P}_c^- \notin \mathcal{T}_p$}, then
     
      $\ \ \ \ \ \mathcal{T}_p=\mathcal{T}_p\cup \mathcal{P}_c^-$;
      
       $\ \ \ \ \ \overline{\mathcal{T}}=\overline{\mathcal{T}}\cup \mathcal{P}_c^-$.\\     
Step 3: Remove $\mathcal{P}_c$ from $\mathcal{T}_p$ and go to step 1.\\
Step 4. $\overline{\mathcal{T}}:\leftarrow$ $\overline{\mathcal{T}}$ union the children of $\overline{\mathcal{T}}$ on the alphabet $A$.

See Appendix \ref{alg}.\\

\begin{property}\label{propertyruntime}The running time for the Trimming Algorithm is $O(\ell \cdot n(\overline{\mathcal{T}}))$. %(see Notation \ref{notation4}).
\end{property}

\begin{proof}

We use a standard data structure for trees: each node maintains a list of pointers to its children and a pointer to its parent. In addition, an auxiliary array can be implemented to have quick access to nodes with certain depth: for example, an array, each slot of which contains a link to a double-linked list containing nodes having a depth indicated by the index of that slot. It takes $O(n(\mathcal{T}))$ to initialize the structure.

With this data structure, during each iteration, Step 1 is $O(1)$. In Step 2 and 3, finding $c$ is $O(1)$ and finding $\mathcal{P}_c^-$ takes $O(\ell)$. Both operations union and delete can be done by following the path from root to $c$ and modify pointers along the way, which means they take $O(\ell)$ time. There are $n(\overline{\mathcal{T}})$ iterations, one for visiting each node in $\overline{\mathcal{T}}$.

After the loop, Step 4 takes $O(n(\overline{\mathcal{T}}))$.

Therefore, the running time of the Trimming Algorithm is $O(\ell \cdot n(\overline{\mathcal{T}}))$.

%We check each nodes in the context tree the in an order of from depth long to short. Each time we visiting a deepest node, we make sure the perfect-memory condition in Corollary~\ref{corollary1} from the left side is correct, if not we add a branch to the context tree. Since each time we adding a new branch to the context tree, the new leaf is one higher generation of the old leaf. This implies that we will not add a new offspring to this leaf later. Therefore, we can put it away after the checking for the moment. 
%Then we simplify this algorithm by flagging the leaves whose siblings are also leaves in the original tree and move the flag along with the leaves whenever we add a new leaf. Then only have to check the perfect-memory conditions if the node is flagged. Since every unflagged nodes has a sibling with flagged off-springs, by saturating the context tree in the end we can make sure these nodes satisfies the conditions in Corollary~\ref{corollary1}. In a rooted binary tree with n leaves, there will be $n$-2 internal nodes. Therefore, skipping the internal nodes can make the algorithm more efficiently. 
\end{proof}

\begin{definition}\label{definiton7} 
 A complete leaf set is a set of any $|A|$ leaves who have a same parent.
\end{definition}

\begin{property}\label{property5}
 If $\mathcal{A}$, $\mathcal{B}$ are perfect-memory context trees and $\mathcal{A}\supset \mathcal{B}$, there exists a sequence of perfect-memory conext trees $\mathcal{T}_1, \mathcal{T}_2, \ldots, \mathcal{T}_k$ such that $\mathcal{A}\supset  \mathcal{T}_1 \ \ldots \supset \mathcal{T}_k \supset \mathcal{B}$, and the difference between each two successive trees is exactly one complete leaf set. 
\end{property}
\begin{proof}
Let $\mathcal{T}_1$ be the tree obtained from $\mathcal{A}$ by cutting off the deepest complete leaf set that is not in $\mathcal{B}$. Similarly let $\mathcal{T}_2$ be the tree obtained from $\mathcal{T}_1$ by cutting off the deepest complete leaf set that is not in $\mathcal{B}$. Repeat this procedure until we get the sequence end with $\mathcal{B}$. Since we always delete the deepest leaf set, by Corollary~\ref{corollary1}, it is easy to see that all the new trees are also perfect-memory.
\end{proof}
Note that we can pick any deepest leaves set to remove, there might be many different sequences that satisfy the conditions in Property~\ref{property5}. 

%
%\begin{notation}
%\label{notation4}
%Given a context tree $\mathcal{T}$, we have the following notations:\\
%\begin{tabular}{r p{10cm}}
% $n$ & size of the alphabet.\\
% $\ell$ & the length of the context tree $\mathcal{T}$. \\
% $\mathcal{T}^\ell$ & the context tree with all the leaves of length $\ell$.\\
% $v(\mathcal{\mathcal{T}})$ & the number of nodes in the context tree $\mathcal{T}$.\\
% $n(\mathcal{T})$ & the number of leaves in the $\mathcal{T}^*$.\\
% $r_1$ & $n(\mathcal{T})/n(\mathcal{T}^\ell)$\\
% $r_2$ & $n(\overline{\mathcal{T}})/n(\mathcal{T})$ (When $\mathcal{T}$is perfect-memory $r_2=1$)\\ 
% \end{tabular}\\
% \end{notation}
% Therefore, we have
% $v(\mathcal{T}^\ell)=\frac{n(n^\ell-1)}{n-1}$ and 
% $n(\mathcal{T}^\ell)=n^\ell$.
% 
% We consider rate $r_1$ as a measurement of how sparse a context tree is and rate $r_2$ as a measurement of how far a context tree is to perfect-memory.
  \begin{notation}
\label{notation5}
Given a complete context tree $\mathcal{T}$, we have the following notations:\\
\begin{tabular}{r p{10cm}}
 $n$ & size of the alphabet.\\
 $\mathcal{T}^\ell$ & the context tree with all the leaves of length $\ell$.\\
 $v(\mathcal{\mathcal{T}})$ & the number of nodes in the context tree $\mathcal{T}$.\\
 $r_1$ & $n(\mathcal{T})/n(\mathcal{T}^\ell)$\\
 $r_2$ & $n(\overline{\mathcal{T}})/n(\mathcal{T})$ (When $\mathcal{T}$ is perfect-memory $r_2=1$)\\ 
 \end{tabular}\\
 \end{notation}
  Therefore, we have
 $v(\mathcal{T}^\ell)=\frac{n(n^\ell-1)}{n-1}$ and 
 $n(\mathcal{T}^\ell)=n^\ell$.
  We consider rate $r_1$ as a measurement of how sparse a context tree is and rate $r_2$ as a measurement of how far a context tree is to perfect-memory.
\begin{property} \label{property6}
$1\leq r_2 \leq \ell$.
\end{property}
\begin{proof}
Assume we have a complete context tree $\mathcal{T}$. Let $\mathcal{T}^*=\{c_1,\ c_2,\ \dots,\ c_m\}$, where $c_i$ are different contexts in $\mathcal{T}^*$. Thus, $n(\mathcal{T})=m$. We say that a string $v\in S$ is a \textit{prefix} of a string $s\in S$, denoted by $v\succ s$, if there exists a string $w\in S$ such that $s=\overline{vw}.$ Let $\mathcal{C}_i^*:=\{u:u\succ c_i\}$. By Corollary \ref{corollary3} and Definition \ref{definition3}, we have $\overline{\mathcal{T}}^*=\big(\bigcup\mathcal{C}_i\big)^*\subseteq\bigcup_{i=1}^m{\mathcal{C}_i^*} .$ Therefore,
$$1\leq r_2=\frac{|\overline{\mathcal{T}}^*|}{|\mathcal{T}^*|}\leq \frac{|\mathcal{C}_1^*|+|\mathcal{C}_2^*|+\cdots+|\mathcal{C}_m^*|}{m}\leq \max_{1\leq i\leq m} |\mathcal{C}_i^*|=\ell.$$
%The example 4 in the next section gives an extreme case to get the upper bound of $r_2$, this upper bond can only be reached if $n\geq \ell-1$. We have
%$n(\overline{\mathcal{T}})=(n-1)(\ell^2-\ell)/2+n$ and $n(\mathcal{T})=(n-1)\ell+1$, therefore 
%$r_2 \leq \frac{\ell}{2}$ for any $\ell \geq 2$.
\end{proof}

%%%%%%%%%%%%%%%%%%%%
\section{Examples}\label{sec:examples}
%%%%%%%%%%%%%%%%%%%%

\noindent {\bf{Example 1: Comb Tree}}

  \Tree[.Root [.0 [.0 
                       [.0 [.0 ] [.1 ]] 
                       [.1 ] ]
                       [.1 ]
                        ]
                 [.1 ]
         ]

This is a perfect-memory the context tree with the minimum rate $r_1$.\\ 
$$r_1 = \frac{(n-1)\ell+1}{n^\ell}\ \to \ 0,\ \text{ as } \ell \to \infty$$

%%%%%%%%%%%%%%tree example 2 tree example 2 %%%%%%%%%%%%

\maketitle
%\section{}
%\subsection{}

% example 2
\noindent {\bf{Example 2: A Sparse Tree}}\\

\Tree [.Root [.0 [.0 0 [.1 0 1 !{\brOverrideBoth}  ] !{\brOverrideBoth} ] [.1 [.0 [.0 0 [.1 0 1 !{\brRestore} ] ] 1 ] 1 !{\brRestore} ] !{\brRestore} ] [.1 [.0 [.0 0 [.1 0 1 !{\brOverrideBoth} ] ] 1 ] 1 ] !{\brRestore} ]

$\mathcal{T}$ is the tree with solid lines. The solid lines and dashed lines all together made $\overline{\mathcal{T}}$. $\mathcal{T}$ is not perfect-memory. $\mathcal{T}$ has the same number of leaves as the binary comb tree (example 1). \\

% example 3

\noindent {\bf{Example 3: The smallest compete context tree whose perfect-memory closure contains full m-MC. (m=4)}}\\
%\Tree [.Root [.0 [.0 [.0 0 1 !{\brOverrideBoth} ] [.1 0 1 !{\brOverrideBoth}  ] !{\brOverrideBoth} ] [.1 [.0 [.0 [.0 0 1 !{\brRestore} ] [.1 0 1 !{\brRestore} ] ] [.1 0 1 ] ] [.1 [.0 0 1 ] [.1 0 1 ] ] !{\brRestore} ] !{\brRestore} ] [.1 [.0 [.0 0 1 !{\brOverrideBoth} ] [.1 0 1 ] !{\brOverrideBoth} ] [.1 [.0 0 1 ] [.1 0 1 ] ] ] !{\brRestore} ]
%\begin{center}
%$\mathcal{T}_a$
%\end{center}
\begin{center}
\Tree [.Root [.0 [.0 [.0 0 1 ] [.1 0 1 ] ] [.1 [.0 [.0 0 1 ] [.1 0 1 ] ] [.1 [.0 0 1 ] [.1 0 1 ] ] ] ] [.1 [.0 [.0 0 1 !{\brOverrideBoth} ] [.1 0 1 ] ] [.1 [.0 0 1 ] [.1 0 1 ] ] ] !{\brRestore} ]
\end{center}
%\begin{center}
%$\mathcal{T}_b$
%\end{center}

%\vspace{-5em}
%\xymatrix{
% \ar[ddr] & &\\
% &  & \\
%{\ar@/^/[urr]}&  &  
%}
%\vspace{5em}

Same as the previous example, the original trees are in solid lines. The dashed lines are the leaves added in the perfect-memory closure. Since for each new leaf of depth m is added if and only if there is another leaf of depth $m+1$ in the original tree by Corollary \ref{corollary1},
 we find the context tree in the figure above has the smallest number of leaves and the perfect-memory closure contains the full m-MC. For binary case we have:\\
$\ r_1 = \frac{3\times 2^{\ell-3}+1}{2^\ell}\ \to \frac{3}{8},\ \text{as}\   \ell\to \infty$\\
$\ \ \ \ \ \ \ \quad\hspace{4.5em} r_2=\frac{5\times 2^{\ell-3}}{3\times 2^{\ell-3}+1}\ \to \ \frac{5}{3},\ \text{as}\  \ell\to \infty$\\

\noindent {\bf{Example 4: A complete context tree with a large $r_2$} }\\

%x, y: adjust scale
\begin{tikzpicture}[x=0.5cm, y=0.5cm,
    emph/.style={edge from parent/.style={draw, dashed}}]
\node at (0, 0) (a) {Root}
	child {node at (-2, -1) {$a_1$}
		child {node at (-4.5, -1) {$a_2$}
			child {node at (-3, -1) {$a_1$}}
			child {node at (-2, -1) (f) {$a_2$}}
			child  {node at (3, -1) (g) {$a_n$}}}}
	child [emph] {node at (-1, -1)  {$a_2$}
		child {node at (-1, -1)  {$a_1$}}
		child {node at (-0.5, -1) (d) {$a_2$}}
		child {node at (5, -1) (e) {$a_n$}}}
	child [emph] {node at (-0.4, -1) (b) {$a_3$}}
	child [emph] {node at (5, -1) (c) {$a_n$}};

\path (b) -- (c) node[midway] {$\cdots$};
\path (d) -- (e) node[midway] {$\cdots$};
\path (f) -- (g) node[midway] {$\cdots$};
\end{tikzpicture}

\begin{center}
Parent Tree of $\mathcal{T}$ and $\overline{\mathcal{T}}$
\end{center}
The figure above is an example for a large $r_2$. The parent tree of $\mathcal{T}$ is in solid lines and the parent tree of its perfect-memory closure is in both solid and dashed lines. In the figure above $\ell=4$, in general we have $n(\mathcal{T})=n^2+(\ell-2)(n-1)$,  $n(\overline{\mathcal{T}})=n^2(\ell-1)+n(2-\ell)+(n-1)(\ell-2)(\ell-3)/2.$\\
Thus, $r_2\to\ell-1,\ \text{as}\   n\to \infty$. Noting that $n$, the size of the alphabet, is finite.

%
%Moreover, when $n\geq \ell-1$, we have the context tree with large $r_2$ as the figure below:(n=4, $\ell$=5)
%
%\vspace{-1em}
%\begin{center}
%{\includegraphics[scale=0.55]{example4.png}}
%\end{center}
%\vspace{-1em}
%$r_2=\frac{n(\overline{\mathcal{T}})}{n(\mathcal{T})}=\frac{(n-1)(\ell^2-\ell)/2+n}{(n-1)\ell+1},\  \text{we have}\ \frac{\ell-1}{2}<r_2<\frac{\ell}{2}$.
%In this example, the tree $\mathcal{T}$ is in black color and its perfect-memory closure is in both black and red color. $\mathcal{T}$ is simply saturated from a string with no repeated letters.

%
%\section{Conclusion}
%
%\noindent Thus perfect-memory context trees create a solid base for statistical applications such as for SCOT models. With the trimming algorithm we can achieve the perfect-memory clusure context tree fast.  \\
%There are some questions that need to think of like in what situations that we can use the perfect-memory closure to replace the not perfect-memory context tree. I suggest that $r_2$ should be less than $\sqrt{\ell}$.
\section{Acknowledgment}
I am very grateful to Mikhail Malioutov, who introduced me to SCOT and gave me this problem to solve and Gabor Lippner, who guided me with the research on the perfect-memory context tree.

\section*{Appendix}

\alglanguage{pseudocode}

\begin{algorithm}[H]
\caption{Trimming Algorithm.}
\label{alg}
\textbf{Input:} A complete context tree $\mathcal{T}$ from an alphabet $A=\{a_{1}, \, \dots, \, a_{n}\}$.\\
\textbf{Output:} $\overline{\mathcal{T}}$, the perfect-memory closure of $\mathcal{T}$.
\begin{algorithmic}[1]
\State $\mathcal{T}_p:\leftarrow$ the parent tree of $\mathcal{T}$.
\State $\overline{\mathcal{T}}:\leftarrow \emptyset$
%\State Flag all the leaves in $\mathcal{T}_p$. 
\State If the length of $\mathcal{T}_p$ is no more than 1, go to line 8. Otherwise go to line 4.  
\State Pick a deepest node c in $\mathcal{T}_p$.    
    \If {$\mathcal{P}_c^- \nsubseteq \mathcal{T}_p$}
     
      $\mathcal{T}_p=\mathcal{T}_p\cup \mathcal{P}_c^-$;
      
       $\overline{\mathcal{T}}=\overline{\mathcal{T}}\cup \mathcal{P}_c^-$;
       
   \EndIf 
   
\State Remove $\mathcal{P}_c$ from $\mathcal{T}_p$ and go to line 3.
\State $\overline{\mathcal{T}}:\leftarrow$ $\overline{\mathcal{T}}$ union the children of $\overline{\mathcal{T}}$ on the alphabet $A$.
\end{algorithmic}
\end{algorithm}

%\end{multicols}


\begin{thebibliography}{99} 

\bibitem{IEEEhowto:IP15} M. Malyutov,  T. Zhang,  ``Limit theorems for additive functions of SCOT trajectories.'' {\it  Information Processes}, 2015, Vol. 15, No. 1, 89--96.

\bibitem{IEEEhowto:IP14} M. Malyutov,  T. Zhang, and  P. Grosu,  ``SCOT stationary distribution evaluation for some examples.'' {\it  Information Processes}, 2014, Vol. 14, No. 3, 275--283.

\bibitem{IEEEhowto:IP13} M. Malyutov, P. Grosu and T.Zhang, ``TSCOT modeling, training and statistical inference.'' {\it The Eighth Workshop on Information Theoretic Methods in Science and Engineering}, 2015, 31-34.

\bibitem{IEEEhowto:IP12} M. Malyutov, Yi Li, X. Li and T. Zhang,  ``Time series homogeneity tests via VLMC training'' {\it  Information Processes}, 2013, Vol. 4, No. 4, 401–414. 

\bibitem{} Peggy Cénac, Brigitte Chauvin, Frédéric Paccaut, and Nicolas Pouyanne, ``Context Trees, Variable Length Markov Chains and Dynamical Sources'', 1-39
  
  \bibitem{} Frans M. J. Willems, Yuri M. Shtarkov, and Tjalling J. Tjalkens,  ``The Context-Tree Weighting Method: Basic Properties'', {\it IEEE Transactions on Information Theory} ,Vol. 41 ,  Issue: 3, 653 - 664
  \bibitem{} ``Universal Piecewise Linear Prediction Via Context Trees'', Issue 7  Part 2 • Date July 2007
\end{thebibliography}
\end{document}